\documentclass[conference]{IEEEtran}
\IEEEoverridecommandlockouts
\usepackage{cite}
\usepackage{amsmath,amssymb,amsfonts}
\usepackage{algorithmic}
\usepackage{graphicx}
\usepackage{textcomp}
\usepackage{xcolor}
\usepackage{booktabs}
\usepackage{tabularx}
\usepackage{multirow}
\usepackage{pgfplots}
\usepackage{pgfplotstable}
\pgfplotsset{compat=1.18}
\usepgfplotslibrary{statistics}
\usepackage{amsthm}
\usepackage{hyperref}
\newtheorem{lemma}{Lemma}

\def\BibTeX{{\rm B\kern-.05em{\sc i\kern-.025em b}\kern-.08em
    T\kern-.1667em\lower.7ex\hbox{E}\kern-.125emX}}

\usepackage{todonotes}

\begin{document}

\title{Why Is SHAP Not a Reliable Standalone Explanation Framework for Malware Detection?}

\author{
\IEEEauthorblockN{Seyedreza Mohseni\IEEEauthorrefmark{1},
Edward Raff\IEEEauthorrefmark{2}\IEEEauthorrefmark{1},
Manas Gaur\IEEEauthorrefmark{1}}
\IEEEauthorblockA{\IEEEauthorrefmark{1}\textit{University of Maryland Baltimore County}, Baltimore, USA}
\IEEEauthorblockA{mohseni1@umbc.edu, manas@umbc.edu}
\IEEEauthorblockA{\IEEEauthorrefmark{2}\textit{CrowdStrike}, New York, USA, edward.raff@crowdstrike.com}
}

\maketitle

\begin{abstract}
Machine learning is widely used for malware detection, but its decisions must be explained. An analyst needs to know whether a model has learned genuine malicious behavior or only dataset-specific patterns \cite{gaur2021semantics}. SHapley Additive exPlanations (SHAP) is the standard tool for this, backed by formal properties such as local accuracy, missingness, and consistency. We argue that these guarantees are insufficient for reliable malware interpretation. We claim SHAP explains a chosen feature-coalition game, not malware behavior in the data. That game is fixed only after the analyst selects the feature players, the missing feature rule, the background distribution, and the simplified input mapping. In static Portable Executable feature spaces, groups such as byte histograms, byte-entropy, strings, headers, sections, imports, and data-directories are not independent signals but are jointly shaped by file structure, packing, compiler behavior, and family conventions. We prove that this dependence makes conditional SHAP dilute a model's feature credit by a factor of $1/m$ across $m-1$ redundant features, attributes importance to features the model never uses, and even reverses the sign of an unused feature's attribution when the data distribution changes; interventional SHAP, meanwhile, queries off-manifold coalitions that no real executable would exhibit. Experiments on EMBER-2018, EMBER-2024, and BODMAS with fixed LightGBM and XGBoost detectors confirm these effects. We therefore position SHAP as a limited diagnostic that requires an explicitly stated data distribution and domain validation, not a standalone account of malware behavior.
\end{abstract}

\begin{IEEEkeywords}
Explainable machine learning, SHAP, Shapley values, feature attribution, malware detection, static PE analysis, feature dependence, interpretability, model explanation. 
\end{IEEEkeywords}

\section{Introduction}

Machine learning is now a core component of modern malware detection, processing large volumes of files and uncovering discriminative patterns that are difficult for human analysts to identify manually~\cite{saxe2015deep,arp2014drebin,raff2018malconv,raff_cybersecurity_2026}. In static analysis of Portable Executable (PE) files, gradient-boosted trees and deep networks routinely achieve high detection rates and are increasingly deployed in operational security pipelines~\cite{Raff2020a}. In these settings, raw accuracy is not sufficient. An analyst who must triage an alert, justify a quarantine, or audit a model before deployment needs to know \emph{why} a file was flagged, whether the model relies on genuine malicious behavior or on incidental dataset artifacts, and whether the same explanation would hold for similar files~\cite{rudin2019we,warnecke2020evaluating,nandan_towards_2026,stortz_large_2026}. This demand for accountability has made post-hoc feature attribution a standard part of the malware-ML toolchain~\cite{ribeiro2016why,sundararajan2017axiomatic,Raff2020autoyara,Zak2017}.

Among these methods, SHapley Additive exPlanations (SHAP)~\cite{lundberg2017unified} is the most widely adopted. Rooted in cooperative game theory~\cite{shapley1953value}, SHAP assigns each feature an additive importance value and unifies a family of attribution methods under formal properties such as local accuracy, missingness, and consistency. Efficient estimators such as TreeSHAP~\cite{lundberg2018consistent} and KernelSHAP~\cite{covert2021improving} have lowered the cost of applying SHAP to high-dimensional malware feature spaces, and SHAP attributions now frequently serve as de facto evidence that a classifier has learned the right thing.

This paper argues that such trust is misplaced when SHAP is used as a \emph{standalone} interpreter of static PE malware models. Our concern is not that SHAP computes incorrect numbers. For a fixed value function and simplified input mapping, its attributions are well-defined, and its axioms hold. Our concern is not the numerical validity of SHAP attributions for a fixed value function and simplified input mapping. Rather, it is the semantic object those attributions characterize: SHAP explains the selected feature-coalition game, which need not correspond to the malware behavior represented by the data~\cite{gaur2024building}. That game is fixed only after the analyst has selected the feature players, the missing feature rule, the background distribution, and the simplified input mapping; in the malware setting, these choices are neither neutral nor obvious, and different choices yield different attributions for the same model and file~\cite{sundararajan2020many,merrick2019explanation}.

This gap between the features and malware matters more for malware than for many domains because of \emph{feature dependence}. In static PE feature representations such as EMBER-2018~\cite{anderson2018ember}, EMBER-2024 ~\cite{joyce2025ember2024} and BODMAS~\cite{yang2021bodmas}, the standard feature groups such as \texttt{byte-histograms}, \texttt{byte-entropy}, \texttt{strings}, \texttt{headers}, \texttt{sections}, \texttt{imports}, and \texttt{data-directories} are not independent signals. They come from the same executable and are jointly shaped by file structure, packing, compiler behavior, family conventions, and time-dependent collection effects known to bias malware benchmarks~\cite{pendlebury2019tesseract} (e.g., obfuscations transform all features together~\cite {mohseni_can_2025,mohseni_analyzing_2026}). SHAP's coalition mechanism, however, treats features as separable players that can be added to or removed from a coalition. Under such dependence, the two ways of realizing this mechanism diverge, and each becomes problematic.

\emph{Interventional} SHAP replaces missing features with background draws, producing combinations that no real PE file would exhibit and evaluating the model in regions it never learned~\cite{janzing2020feature,hooker2021unrestricted}; we show empirically that the measured group-level dependence in these datasets makes such off-manifold coalitions the rule rather than the exception. \emph{Conditional} SHAP respects the data distribution but can instead credit proxy features that merely predict others, absorb dataset artifacts such as family signatures, and require estimating high-dimensional conditional densities that are themselves error-prone~\cite {aas2021explaining,frye2020asymmetric}; we make these failures precise as lemmas in Section~\ref{sec:theory}. The analyst is thus pushed toward either an explanation of artificial interventions or one of indirect statistical association, neither of which is guaranteed to describe the malicious behavior under study.

These tensions are not new in general machine learning. Kumar et al.~\cite{kumar2020problems} show that Shapley-based importance behaves counterintuitively under dependence and redundancy in features. Janzing et al.~\cite{janzing2020feature} frame feature relevance as a causal problem motivating the interventional formulation; Chen et al.~\cite{chen2020true} crystallize the choice as true to the model or true to the data; and Sundararajan and Najmi~\cite{sundararajan2020many} show that incompatible value functions yield many Shapley values for the same prediction. Reliability concerns extend even to adversarial manipulation of SHAP outputs~\cite{slack2020fooling}, and global Shapley-style importance inherits the same dependence sensitivities~\cite{covert2020understanding}. In security specifically, Warnecke et al.~\cite{warnecke2020evaluating} found that explanation methods struggle with structured security data and produce unstable outputs. What has been missing is a \emph{focused, mathematically explicit} account of how these abstract failure modes manifest on the static PE feature groups the community actually uses, and what they imply for an analyst reading a SHAP plot as evidence of malware behavior. This paper provides that account, a set of lemmas for the interventional and conditional cases with the underlying classifiers held fixed, together with a controlled empirical study, so that any instability is attributable to the explanation process rather than to model quality.

\textbf{Contributions}. This article supports its results through a combination of theoretical exposition showing that multiple issues can occur, followed by empirical validation. 
\begin{itemize}
  \item \textbf{A domain-specific articulation of the problem.} We formalize, for static PE malware feature groups, \emph{why} interventional and conditional SHAP diverge under feature dependence, and we identify the resulting failure modes: off-manifold coalition queries for interventional SHAP, and redundancy dilution, proxy crediting, and sign instability for conditional SHAP.
  
  \item \textbf{An interventional off-manifold failure (Lemma~1).} We prove that whenever PE feature groups are statistically dependent ($I_P(X_C; X_{\bar C})>0$), interventional SHAP evaluates the classifier on synthetic group-mixed coalitions whose distribution differs from the real data distribution. 

  \item \textbf{Conditional attribution failures (Lemmas~2--4).} We prove that conditional SHAP (i) dilutes the attribution of a model-used feature by a factor $1/m$ when $m-1$ redundant features (proxies) encoding the same latent malware factor are added (Lemma~2, \emph{redundancy}); (ii) assigns nonzero attribution to a feature that the classifier never uses whenever that feature is statistically informative about a used feature (Lemma~3, \emph{proxy attribution}); and (iii) can reverse the attribution \emph{sign} of such an unused feature across two valid dataset distributions (Lemma~4, \emph{sign instability}).

  \item \textbf{A controlled empirical demonstration.} On EMBER-2018~\cite{anderson2018ember}, EMBER-2024 ~\cite{joyce2025ember2024}, and BODMAS~\cite{yang2021bodmas}, we (i) measure the group-level dependence structure and show that PE feature groups are far from the independent players that SHAP's coalition mechanism assumes, and (ii) for \textbf{EMBER-2018}, with strong LightGBM and XGBoost detectors held fixed, we sweep injected redundant features and observe attribution migrating off the model used features in close agreement with the $1/m$ prediction of Lemma~2, while predictive performance remains unchanged, isolating the instability to the explanation process rather than the detector.
  
\end{itemize}

\section{Background of SHAP}

SHAP explains a model's prediction by treating the input features as players in a game and assigning each feature a value, called a Shapley value, which reflects its contribution to the final prediction. For a model $f$, an input $x$, and a set of features $D$, the SHAP value for feature $i$ is usually written as 

$$
\setlength{\abovedisplayskip}{2pt}
\phi_i(x)=\sum_{S\subseteq D\setminus\{i\}}\frac{|S|!(d-|S|-1)!}{d!}\Big[v_x(S\cup\{i\})-v_x(S)\Big]
\label{eq:main-shap}
$$

where $v_x(S)$ ($S\subseteq D$) is the value of the model when only the features in $S$ are known; three main properties support SHAP. 
Assume $z'\in{\{0,1}\}^d$ is a simplified binary input, $\phi_0$ is the base value, and each $\phi_i$ is the attribution assigned to feature $i$. We can define the explanation model $g(z')$ as $g(z')=\phi_0+\sum_{i=1}^{d}\phi_i z'_i$. The SHAP paper defines additive feature attribution methods as explanation models that are linear functions of binary variables.

Missingness means that a feature that is not present should receive no credit, so if $z'_i=0$, then $\phi_i=0$. Consistency means that if a feature contributes more to a model after the model changes, its assigned value should not decrease. The original SHAP paper shows that, under these conditions, there is a unique additive attribution method for a chosen simplified input mapping $h_x(z')$. However, these are the key limitations for malware analysis: \textit{1- How do we define $v_x(S)$?} \textit{2- How do we map missing features back to real inputs?} \textit{3- How do we handle the conditional relationship between features?}

If the value function uses conditional expectations, interventional sampling, or an independence approximation, then the resulting SHAP values can change. Therefore, the formal SHAP guarantees hold only after the malware analyst has chosen the feature players, the missing-feature rule, the background distribution, and the simplified input mapping. This is especially important for malware datasets, where features are dependent, and removing one feature may not represent a real executable file.

\section{Theoretical Explanation}
\label{sec:theory}


\subsection{Failure of Interventional SHAP}

\begin{lemma}[Invalid Coalition Failure of Interventional SHAP in Malware Feature Spaces]
\label{lem:invalid_coalition_interventional_malware}
Let \(X\sim P_{\mathrm{data}}\) be a real static malware feature vector extracted from PE files. Assume that the feature vector is partitioned into malware-relevant feature groups $\mathcal{G} = \{g_1,\ldots,g_m \}$. For a nonempty proper group coalition \(C\subset\mathcal{G}\), define a synthetic group-mixed coalition sample by $\widetilde{X}^{C} = \left( X_C^{a},X_{\bar C}^{b} \right)$ ($\widetilde{X}^{C}$ representing the entire feature set). Let \(Q_C\) denote the distribution of \(\widetilde{X}^{C}\). Then $Q_C = P_{\mathrm{data}}(X_C)\otimes P_{\mathrm{data}}(X_{\bar C})$ ($\otimes$ denotes the product measure) If the malware feature groups are statistically dependent, $I_{P}(X_C; X_{\bar C})>0 $, then $Q_C\neq P_{\mathrm{data}}$ Therefore, interventional SHAP can evaluate the malware classifier on synthetic group coalitions that are not distributed like real malware or benign PE feature vectors.
\end{lemma}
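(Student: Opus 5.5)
The argument has two parts. First we identify $Q_C$ as a product measure. Then we use the characterization of mutual information as a KL divergence to show it differs from $P_{\mathrm{data}}$.

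We make explicit what the statement leaves implicit: $X^a$ and $X^b$ are independent draws from $P_{\mathrm{data}}$. Here $X^a$ is the explained or foreground sample and $X^b$ is the background sample, as in interventional SHAP, where background draws are taken independently of the input. For measurable sets $A$ in the $X_C$-space and $B$ in the $X_{\bar C}$-space, independence gives
\begin{equation*}
\Pr\bigl(\widetilde X^C_C\in A,\ \widetilde X^C_{\bar C}\in B\bigr)=\Pr(X^a_C\in A)\,\Pr(X^b_{\bar C}\in B)=P_{\mathrm{data}}(X_C\in A)\,P_{\mathrm{data}}(X_{\bar C}\in B).
\end{equation*}
Rectangles $A\times B$ form a $\pi$-system generating the product $\sigma$-algebra. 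By the $\pi$--$\lambda$ (Dynkin) uniqueness theorem, the law of $\widetilde X^C$ is therefore the product of the marginals, so $Q_C=P_{\mathrm{data}}(X_C)\otimes P_{\mathrm{data}}(X_{\bar C})$.

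For the inequality we use the standard identity
\begin{equation*}
I_P(X_C;X_{\bar C})=D_{\mathrm{KL}}\bigl(P_{\mathrm{data}}\,\|\,P_{\mathrm{data}}(X_C)\otimes P_{\mathrm{data}}(X_{\bar C})\bigr)=D_{\mathrm{KL}}(P_{\mathrm{data}}\,\|\,Q_C).
\end{equation*}
Gibbs' inequality says $D_{\mathrm{KL}}(P\|Q)=0$ if and only if $P=Q$. Hence $I_P>0$ forces $Q_C\neq P_{\mathrm{data}}$.

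The identity needs care when $P_{\mathrm{data}}$ is not absolutely continuous with respect to the product of its marginals. In that case we adopt the convention that the divergence, and hence the mutual information, is $+\infty$. The conclusion then holds trivially: $P_{\mathrm{data}}\not\ll Q_C$ already implies $P_{\mathrm{data}}\neq Q_C$. A direct alternative that avoids KL entirely is to note that $I_P>0$ means $X_C$ and $X_{\bar C}$ are not independent under $P_{\mathrm{data}}$. So some rectangle has $P_{\mathrm{data}}(A\times B)\neq P(A)P(B)=Q_C(A\times B)$.

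The final sentence of the lemma is interpretive. Interventional SHAP evaluates $v_x(S)=\mathbb E_{X^b}[f(x_S,X^b_{\bar S})]$. With $x$ itself drawn from $P_{\mathrm{data}}$, these queries are distributed according to $Q_C$. We would add a remark that $Q_C$ may put positive mass on regions where $P_{\mathrm{data}}$ has little or no mass, such as a packed-entropy byte histogram paired with an unpacked section table. That is exactly the off-manifold evaluation the lemma warns about.

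The main obstacle is not technical but definitional: fixing the sampling model so that $X^a\perp X^b$. If the foreground is a fixed $x$ rather than a random draw, the statement should be read as a statement about the induced coalition distribution averaged over explained inputs, and we would state this explicitly.
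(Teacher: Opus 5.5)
Your proof is correct and follows the paper's argument. You identify $Q_C$ as the product of marginals from the independence of the two source files, then use $I_P(X_C;X_{\bar C}) = D_{\mathrm{KL}}(P_{\mathrm{data}}\,\|\,Q_C)$ and the equality case of Gibbs' inequality to conclude $Q_C\neq P_{\mathrm{data}}$; your $\pi$--$\lambda$ step and absolute-continuity caveat add rigor the paper omits, while the paper makes the final interpretive sentence concrete by writing $V_x^{cond}(C)-V_x^{int}(C)$ as $\int f(x_C,z)\,[p_{\mathrm{data}}(z\mid x_C)-p_{\mathrm{data}}(z)]\,dz$.
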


\begin{proof}
The real malware dataset distribution over the selected and missing feature groups is $P_{\mathrm{data}}(X_C, X_{\bar C})$. However, the group mixed coalition construction draws \(X_C\) from one real file and \(X_{\bar C}\) from another independent real file. Therefore, the synthetic coalition distribution is the product of the two marginal distributions $Q_C = P_{\mathrm{data}}(X_C)P_{\mathrm{data}}(X_{\bar C})$. The mutual information between the two group blocks is

\[
\setlength{\abovedisplayskip}{2pt}
I_{P}(X_C;X_{\bar C}) = D_{\mathrm{KL}} \left(P_{\mathrm{data}}(X_C,X_{\bar C})
\,\middle\|\, P_{\mathrm{data}}(X_C)P_{\mathrm{data}}(X_{\bar C}) \right)
\]

By the non-negativity property of KL divergence, $I_{P}(X_C;X_{\bar C})=0$ if and only if
\[
\setlength{\abovedisplayskip}{2pt}
P_{\mathrm{data}}(X_C,X_{\bar C}) = P_{\mathrm{data}}(X_C)P_{\mathrm{data}}(X_{\bar C})
\]

Thus, if $I_{P}(X_C;X_{\bar C})>0$ then

\[
\setlength{\abovedisplayskip}{2pt}
P_{\mathrm{data}}(X_C,X_{\bar C}) \neq P_{\mathrm{data}}(X_C)P_{\mathrm{data}}(X_{\bar C})
\]

and therefore $Q_C\neq P_{\mathrm{data}}$. This distributional mismatch directly affects interventional SHAP. The group-level interventional value function is $V_x^{int}(C) = \mathbb{E}_{Z_{\bar C}\sim P_{\mathrm{data}}(X_{\bar C})} \left[f(x_C,Z_{\bar C}) \right] $ ~\cite{janzing2020feature}. This value function evaluates the classifier on hybrid samples of the form $(x_C, Z_{\bar C})$ where \(x_C\) comes from the explained file and \(Z_{\bar C}\) is sampled from the marginal distribution of missing groups. However, the real data-consistent conditional value is

\[
\setlength{\abovedisplayskip}{2pt}
V_x^{cond}(C) = \mathbb{E}_{Z_{\bar C}\sim P_{\mathrm{data}}(X_{\bar C}\mid X_C=x_C)} \left[f(x_C,Z_{\bar C}) \right]
\]

When the malware feature groups are dependent,

\[
\setlength{\abovedisplayskip}{2pt}
P_{\mathrm{data}}(X_{\bar C}\mid X_C=x_C) \neq P_{\mathrm{data}}(X_{\bar C})
\]

Consequently,

$$
\setlength{\abovedisplayskip}{2pt}
V_x^{cond}(C)-V_x^{int}(C) = \int f(x_C,z) \left[
p_{\mathrm{data}}(z\mid x_C)-p_{\mathrm{data}}(z) \right]dz
$$

which need not be zero. Therefore, interventional SHAP may compute marginal contributions using artificial coalition samples rather than realistic PE feature vectors. Interventional SHAP fills missing features from their marginal distribution, breaking the dependence that binds PE feature groups within a single executable. It thus queries the malware classifier on synthetic, off-manifold coalitions that no real benign or malicious file would produce. Hence, its attributions reflect model behavior in unlearned regions of feature space rather than the malicious behavior an analyst seeks to interpret.
\end{proof}


\subsection{Failure of Conditional SHAP}

Conditional SHAP is often presented as a safer alternative to interventional SHAP because it respects the data distribution; instead of replacing missing features independently, it defines the coalition value as $V_x^{\mathrm{cond}}(S)=\mathbb{E}[f(X)\mid X_S=x_S]$. In principle, this avoids some artificial feature combinations. However, it does not make conditional SHAP a reliable explanation method for malware. The reason is the strong dependence among PE features discussed above: most feature groups are linked, repeated, or jointly shaped by file structure, so conditional SHAP explains a model over highly correlated variables. Therefore, the conditional distribution
$P(X_{\bar S}\mid X_S=x_S)$ is not a minor technical detail; it becomes the core object that determines the explanation. We identify three resulting failure modes.

In static PE malware representations, a single latent factor, such as packing, encryption, obfuscation, polymorphic mutation, metamorphic code rewriting, or family-specific implementation style, often manifests through multiple observable features simultaneously. For example, a packing-related latent factor $Z$ may be encoded in a byte-entropy feature $R_1$ as well as in related histogram, opcode-frequency, \texttt{section}, \texttt{string}, or API-pattern features $R_2,\ldots, R_m$. Because these features are extracted from the same executable, they carry redundant information about $Z$ and are not independent players. Lemma \autoref{lem:redundancy_failure_conditional_shap_malware} shows that this redundancy causes conditional SHAP to dilute and misattribute credit.

\begin{lemma}[Redundancy Failure of Conditional SHAP]
\label{lem:redundancy_failure_conditional_shap_malware}
Let $R=(R_1,\ldots,R_m)$ with $R_j=h_j(Z)$ for injective $h_j$, so that $P(Z\mid R_j = h_j(z)) = \delta_z$. Let $f(R_1,\ldots,R_m)=g(R_1)$ for measurable $g$, so $\frac{\partial f}{\partial R_j}=0$ for every $j\geq 2$ whenever the derivative is defined. Let Conditional SHAP use the value function $v_x^{cond}(S) = \mathbb{E} \left[ f(R_1,\ldots,R_m)\mid R_S=r_S \right]$.

For a sample \(x\) with \(Z=z\), the Conditional SHAP value assigned to every feature is

\[
\setlength{\abovedisplayskip}{2pt}
\phi_{R_j}^{cond}(x) = \frac{g(h_1(z))-\mathbb{E}[g(h_1(Z))]}{m} \qquad j=1,\ldots,m
\setlength{\belowdisplayskip}{2pt}
\]

Therefore, Conditional SHAP assigns nonzero attribution to features \(R_j\) and \(j\geq 2\), which the classifier does not use, and the attribution of the used feature \(R_1\) is reduced by a factor of \(1/m\). Conditional SHAP is not invariant to redundant feature representation.
\end{lemma}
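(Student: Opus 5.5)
The plan is to show that the conditional value function $v_x^{cond}$ takes only two values, so the coalition game is symmetric and the Shapley value splits the total gain evenly. The key observation is that any nonempty coalition already determines $Z$, and hence determines $f$.

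\textbf{Step 1: the empty coalition.} Write $c_0=\mathbb{E}[g(h_1(Z))]$. For $S=\emptyset$, the conditioning is trivial, so
\[
v_x^{cond}(\emptyset)=\mathbb{E}[f(R)]=\mathbb{E}[g(R_1)]=\mathbb{E}[g(h_1(Z))]=c_0 .
\]

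\textbf{Step 2: nonempty coalitions.} Fix $S\neq\emptyset$ and pick any $j\in S$. Since $h_j$ is injective, the event $\{R_j=h_j(z)\}$ coincides with $\{Z=z\}$; this is the hypothesis $P(Z\mid R_j=h_j(z))=\delta_z$. On that event every coordinate is determined: $R_k=h_k(z)$ for all $k$, so in particular $R_1=h_1(z)$. Hence, for $r_S$ equal to the observed values $(h_k(z))_{k\in S}$,
\[
v_x^{cond}(S)=\mathbb{E}[g(R_1)\mid R_S=r_S]=g(h_1(z)) .
\]
This is the step I expect to need the most care. Conditioning on a single point is only meaningful via a regular conditional distribution, defined almost surely. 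I would state the result for $P$-almost every $z$, or assume $Z$ is discrete, and use the $\delta_z$ hypothesis as the definition of that regular version. Note that conditioning on the extra coordinates in $S$ is consistent, because they are deterministic functions of $Z$.

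\textbf{Step 3: Shapley values.} The game is $v(\emptyset)=c_0$ and $v(S)=c_1:=g(h_1(z))$ for every $S\neq\emptyset$. A marginal contribution $v(S\cup\{j\})-v(S)$ equals $c_1-c_0$ when $S=\emptyset$ and $0$ otherwise. In the Shapley formula, the weight of the empty coalition is $0!\,(m-1)!/m!=1/m$. Therefore
\[
\phi_{R_j}^{cond}=\frac{c_1-c_0}{m}\quad\text{for every } j .
\]
As a cross-check, the symmetry and efficiency axioms give the same answer: all $m$ players are interchangeable and must share $v(D)-v(\emptyset)=c_1-c_0$ equally.

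\textbf{Step 4: consequences.} For $j\ge 2$, $f$ does not depend on $R_j$, yet $\phi_{R_j}^{cond}\neq 0$ whenever $c_1\neq c_0$. The used feature $R_1$ receives $1/m$ of the total gain that it alone would receive when $m=1$. This dependence on $m$ is exactly the stated lack of invariance to adding redundant features.
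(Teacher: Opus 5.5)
Your proposal is correct and follows the paper's proof essentially step for step: the paper also computes $v_x^{cond}(\emptyset)=\mathbb{E}[g(h_1(Z))]$, argues that every nonempty coalition reveals $Z=z$ so that $v_x^{cond}(S)=g(h_1(z))$, keeps only the empty-coalition marginal contribution with Shapley weight $1/m$, and then compares the result with the single-feature attribution. Your additions are sound refinements of points the paper leaves implicit: the symmetry/efficiency cross-check, the caveat about regular conditional distributions, and the explicit condition $c_1\neq c_0$ for the attribution to be nonzero.
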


\begin{proof}
Since each redundant malware feature satisfies $R_j=h_j(Z)$ and each \(h_j\) is injective, observing any nonempty subset of the redundant features reveals the same latent malware factor \(Z=z\). Therefore, for any nonempty coalition \(S\neq\emptyset\),

\[
\setlength{\abovedisplayskip}{2pt}
V_x^{cond}(S) = \mathbb{E} \left[ g(R_1)\mid R_S=r_S \right] = g(h_1(z))
\setlength{\belowdisplayskip}{2pt}
\]

For the empty coalition, $v_x^{cond}(\emptyset) = \mathbb{E}[g(R_1)] = \mathbb{E}[g(h_1(Z))]$. The Conditional SHAP value of feature \(R_j\) is

\[
\begin{gathered}
\phi_{R_j}^{cond}(x) = \sum_{S\subseteq M\setminus\{j\}} \frac{|S|!(m-|S|-1)!}{m!} \\
\left[ v_x^{cond}(S\cup\{j\})-v_x^{cond}(S) \right]
\end{gathered}
\]

where $M=\{1,\ldots,m\}$. If \(S=\emptyset\), then $v_x^{cond}(\{j\})-v_x^{cond}(\emptyset) = g(h_1(z))-\mathbb{E}[g(h_1(Z))]$.
If \(S\neq\emptyset\), then both \(S\) and \(S\cup\{j\}\) reveal the same latent factor \(Z=z\), so $v_x^{cond}(S\cup\{j\})-v_x^{cond}(S) = g(h_1(z))-g(h_1(z)) = 0$.

Thus, the only nonzero marginal contribution occurs when \(S=\emptyset\). The Shapley weight of the empty coalition is $\frac{0!(m-1)!}{m!} = \frac{1}{m}$. Therefore,

\[
\setlength{\abovedisplayskip}{2pt}
\phi_{R_j}^{cond}(x) = \frac{1}{m} \left[ g(h_1(z))-\mathbb{E}[g(h_1(Z))] \right]
\qquad j=1,\ldots,m
\]

This proves that all redundant features receive equal attribution, including features that the classifier does not directly use. If only the model-used feature \(R_1\) were present, its single-feature attribution would be $\phi_{R_1}^{single}(x) = g(h_1(z))-\mathbb{E}[g(h_1(Z))]$.
After adding \(m-1\) redundant malware features, its attribution becomes $\phi_{R_1}^{cond}(x) = \frac{1}{m} \left[ g(h_1(z))-\mathbb{E}[g(h_1(Z))] \right]$.
Hence, $\phi_{R_1}^{cond}(x)/\phi_{R_1}^{single}(x) = 1/m$.
As \(m\) increases, $\phi_{R_1}^{cond}(x)\rightarrow 0$, even though the classifier and the underlying malware signal are unchanged. Therefore, Conditional SHAP attribution depends on how many redundant feature encodings are included in the malware feature table. This proves the redundancy failure.
\end{proof}


Beyond exact redundancy, PE feature spaces often include features that are merely statistically correlated with a model-used feature, rather than deterministic functions of the same latent factor. For instance, a classifier may use only a byte-entropy measurement $B$. In contrast, a related histogram bin, opcode-frequency statistic, API-call pattern, CFG metric, section statistic, or another entropy-derived measurement $R$ is included in the feature table. Even though the classifier is functionally independent of $R$, statistical dependence between $B$ and $R$ in the data allows conditional SHAP to assign nonzero attribution to $R$, as Lemma \autoref{lem:proxy_attribution_failure_conditional_shap_malware} shows.

\begin{lemma}[Proxy Attribution Failure of Conditional SHAP]
\label{lem:proxy_attribution_failure_conditional_shap_malware}
Let $f(B,R)=g(B)$ for integrable $g$, so $\frac{\partial f}{\partial R}=0$ whenever the derivative is defined. Let Conditional SHAP use the value function $v_x^{cond}(S) = \mathbb{E}\left[f(B,R)\mid X_S=x_S\right]$. For a sample \(x=(b,r)\), the Conditional SHAP attribution assigned to \(R\) is

\begin{equation}
\small
\phi_R^{cond}(x) = \frac{1}{2} \left[\mathbb{E}[g(B)\mid R=r] - \mathbb{E}[g(B)] \right]
\label{eq:conditional}
\end{equation}

Therefore, if $\mathbb{E}[g(B)\mid R=r]\neq \mathbb{E}[g(B)]$, then $\phi_R^{cond}(x)\neq 0$ even though the classifier does not use \(R\). Conditional SHAP can assign attribution to a feature solely because it is statistically informative about another feature used by the model.
\end{lemma}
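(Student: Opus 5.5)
This is a direct two-player Shapley computation with players $D=\{B,R\}$. I will first write out the four coalition values and then substitute them into the Shapley formula from the background section with $d=2$.

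\textbf{Coalition values.} For the sample $x=(b,r)$ they are:
\begin{itemize}
\item $v_x^{cond}(\emptyset)=\mathbb{E}[g(B)]$, which is finite because $g$ is integrable.
\item $v_x^{cond}(\{B\})=\mathbb{E}[g(B)\mid B=b]=g(b)$.
\item $v_x^{cond}(\{R\})=\mathbb{E}[g(B)\mid R=r]$.
\item $v_x^{cond}(\{B,R\})=f(b,r)=g(b)$.
\end{itemize}
The only nontrivial value is the one for $\{R\}$. Because $f$ ignores its second argument, conditioning on $R$ affects the value only through the conditional law of $B$ given $R=r$. This is precisely the channel by which the unused feature can receive credit.

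\textbf{Substitution.} With $d=2$, the Shapley weights for $S=\emptyset$ and $S=\{B\}$ are both $\frac{0!\,1!}{2!}=\frac{1!\,0!}{2!}=\frac12$. This gives
\[
\phi_R^{cond}(x)=\tfrac12\big[v(\{R\})-v(\emptyset)\big]+\tfrac12\big[v(\{B,R\})-v(\{B\})\big].
\]
The second bracket equals $g(b)-g(b)=0$, so the first term alone yields Eq.~(\ref{eq:conditional}). The concluding claim then follows at once: whenever $\mathbb{E}[g(B)\mid R=r]\neq\mathbb{E}[g(B)]$, the attribution $\phi_R^{cond}(x)$ is nonzero, even though $f$ does not depend on $R$. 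I would also note that if $B$ and $R$ are independent, the conditional expectation collapses to $\mathbb{E}[g(B)]$ and the attribution vanishes. This shows that the failure is driven entirely by dependence. As a consistency check, I would compute $\phi_B=\tfrac12[g(b)-\mathbb{E}g(B)]+\tfrac12[g(b)-\mathbb{E}[g(B)\mid R=r]]$ and verify that $\phi_B+\phi_R=g(b)-\mathbb{E}[g(B)]$, which is local accuracy.

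\textbf{Main obstacle.} The only delicate step is measure-theoretic: making sense of conditioning on the events $B=b$ and $R=r$, which may have probability zero. I would handle this by interpreting $\mathbb{E}[g(B)\mid R=r]$ as a fixed version of the regular conditional expectation, evaluated at $r$. Under this reading, the identity holds for $P$-almost every $r$, which is all the lemma needs. Integrability of $g$ guarantees that these conditional expectations exist. The identity $\mathbb{E}[g(B)\mid B=b]=g(b)$ holds almost surely for the same reason.
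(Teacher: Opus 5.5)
Your proposal is correct and follows the paper's proof essentially line for line: the same four coalition values, the same two-player Shapley decomposition with weights $\tfrac12$, and the observation that the bracket $v(\{B,R\})-v(\{B\})=g(b)-g(b)$ vanishes. Your local-accuracy check and your remark on fixing a version of the regular conditional expectation are sound additions that the paper omits, but they do not change the argument.
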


\begin{proof}
For two features, the Conditional SHAP attribution of \(R\) is

\[
\setlength{\abovedisplayskip}{2pt}
\phi_R^{cond}(x) = \frac{1}{2} \left[v(\{R\})-v(\emptyset) \right] + \frac{1}{2} \left[v(\{B,R\})-v(\{B\}) \right]
\]

Since the classifier satisfies $f(B,R)=g(B)$, we have $v(\emptyset) = \mathbb{E}[g(B)]$, $v(\{R\}) = \mathbb{E}[g(B)\mid R=r]$, $v(\{B\}) = \mathbb{E}[g(B)\mid B=b] = g(b)$, and $v(\{B,R\}) = f(b,r) = g(b)$. Therefore, $v(\{B,R\})-v(\{B\}) = g(b)-g(b) = 0$. Substituting these values into the Conditional SHAP formula gives Eq. ~\ref{eq:conditional} Thus, whenever conditioning on \(R=r\) changes the expected value of the model-used feature response \(g(B)\), the proxy feature receives nonzero attribution:

\[
\setlength{\abovedisplayskip}{2pt}
\mathbb{E}[g(B)\mid R=r]\neq \mathbb{E}[g(B)] \quad\Longrightarrow\quad \phi_R^{cond}(x)\neq 0
\]

However, the classifier is functionally independent of \(R\). The nonzero attribution therefore arises from statistical dependence in the malware dataset, not from the classifier's direct use of \(R\). This proves the proxy attribution failure.
\end{proof}

Lemma~\ref{lem:proxy_attribution_failure_conditional_shap_malware} generalizes the $m=2$ case of Lemma~\ref{lem:redundancy_failure_conditional_shap_malware} by relaxing the requirement from exact deterministic redundancy (injective $h_j$) to mere statistical dependence between $R$ and $B$. In Lemma~\ref{lem:redundancy_failure_conditional_shap_malware}, conditioning on $R$ reveals $Z$ exactly and hence determines $B$, so $\mathbb{E}[g(B)\mid R=r] = g(b)$ and the two-player attribution reduce to exactly $\frac{1}{2}[g(b)-\mathbb{E}[g(B)]]$. In Lemma~\ref{lem:proxy_attribution_failure_conditional_shap_malware}, the proxy $R$ needs only to be statistically informative about $B$, producing a potentially weaker but still nonzero attribution. The exact redundancy setting thus represents the worst case within the broader proxy attribution failure.


The proxy attribution failure of Lemma \autoref{lem:proxy_attribution_failure_conditional_shap_malware} is not merely quantitative; it can also reverse sign across dataset distributions. Lemma \autoref{lem:sign_instability_conditional_shap_malware} shows that this causes the conditional SHAP sign of $R$ to flip.

\begin{lemma}[Sign Instability Failure of Conditional SHAP]
\label{lem:sign_instability_conditional_shap_malware}
Let $f(B,R)=g(B)$ for integrable $g$, so $\frac{\partial f}{\partial R}=0$ whenever the derivative is defined. Let Conditional SHAP use the value function $v_{x,P}^{cond}(S) = \mathbb{E}_{P}\left[f(B,R)\mid X_S=x_S\right]$, where \(P\) is the dataset distribution. For a sample \(x=(b,r)\), the Conditional SHAP value of \(R\) is $\phi_{R,P}^{cond}(x) = \frac{1}{2} \left(\mathbb{E}_{P}[g(B)\mid R=r] - \mathbb{E}_{P}[g(B)] \right)$ Therefore,

$$
\setlength{\abovedisplayskip}{2pt}
\operatorname{sign}\left(\phi_{R,P}^{cond}(x)\right) = \operatorname{sign}\left(\mathbb{E}_{P}[g(B)\mid R=r] - \mathbb{E}_{P}[g(B)]\right)
$$

Consequently, for the same classifier \(f(B, R)=g(B)\) and the same feature \(R\) not used by the model, two valid malware dataset distributions \(P^{+}\) and \(P^{-}\) can produce opposite Conditional SHAP signs if $\mathbb{E}_{P^{+}}[g(B)\mid R=r] > \mathbb{E}_{P^{+}}[g(B)]$ and $\mathbb{E}_{P^{-}}[g(B)\mid R=r] < \mathbb{E}_{P^{-}}[g(B)]$. In that case, $\phi_{R,P^{+}}^{cond}(x)>0$ and $\phi_{R,P^{-}}^{cond}(x)<0$.

Thus, Conditional SHAP can assign opposite signs to an unused malware feature solely because the dataset dependence between \(R\) and the model-used feature \(B\) changes.
\end{lemma}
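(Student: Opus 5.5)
The plan is to reduce Lemma~\ref{lem:sign_instability_conditional_shap_malware} to Lemma~\ref{lem:proxy_attribution_failure_conditional_shap_malware}, applied separately to each dataset distribution, and then to show the hypotheses can actually be met.

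\textbf{Step 1: the formula under a fixed $P$.} Fix an arbitrary distribution $P$ on $(B,R)$ with $g(B)$ integrable. Rerun the two-player computation of Lemma~\ref{lem:proxy_attribution_failure_conditional_shap_malware} with every expectation taken under $P$:
\[
v(\emptyset)=\mathbb{E}_P[g(B)],\quad v(\{R\})=\mathbb{E}_P[g(B)\mid R=r],\quad v(\{B\})=v(\{B,R\})=g(b).
\]
The second marginal contribution of $R$ therefore vanishes. This gives
\[
\phi_{R,P}^{cond}(x)=\tfrac12\bigl(\mathbb{E}_P[g(B)\mid R=r]-\mathbb{E}_P[g(B)]\bigr).
\]
Since the factor $\tfrac12$ is positive, the sign identity follows at once.

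\textbf{Step 2: opposite signs under the stated hypotheses.} Instantiate Step 1 at $P^{+}$ and at $P^{-}$. The two displayed inequalities then give $\phi_{R,P^{+}}^{cond}(x)>0$ and $\phi_{R,P^{-}}^{cond}(x)<0$. The classifier is the same in both cases, and $f$ does not depend on $R$, so the sign change comes entirely from the change in the joint law.

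\textbf{Step 3: the hypotheses are not vacuous.} This is the main obstacle, because the lemma only has content if such a pair $P^{+},P^{-}$ exists. I would build one explicitly, first checking that $g$ is nonconstant; otherwise every attribution is zero.

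Take binary $B,R\in\{0,1\}$ and $g(B)=B$, so the detector scores high when $B=1$. Both distributions give $B$ a Bernoulli$(1/2)$ marginal and a uniform $R$:
\begin{itemize}
\item under $P^{+}$, set $R=B$ with probability $0.9$;
\item under $P^{-}$, set $R=1-B$ with probability $0.9$.
\end{itemize}
For $x=(1,1)$ this gives
\[
\mathbb{E}_{P^{+}}[B\mid R=1]=0.9>0.5,\qquad \mathbb{E}_{P^{-}}[B\mid R=1]=0.1<0.5.
\]
Hence the SHAP sign of the unused feature $R$ flips.

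Two points need care in this construction.
\begin{itemize}
\item \textbf{Well-defined conditioning.} Conditioning on $R=r$ must be well defined. The discrete choice avoids any measure-zero issue, since $P(R=r)>0$ under both distributions.
\item \textbf{Validity.} Both distributions should count as "valid" dataset distributions. They are legitimate joint laws, and each can be read as two collection regimes in which, for example, a section statistic co-occurs with high entropy in one corpus and with low entropy in another.
\end{itemize}
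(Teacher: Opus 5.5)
Your proposal is correct and follows the paper's proof. Both arguments evaluate the four coalition values under a fixed $P$, note that $v(\{B,R\})-v(\{B\})=g(b)-g(b)=0$ so only the empty-coalition term with weight $\tfrac12$ survives, and then read off the sign at $P^{+}$ and $P^{-}$ from the positive factor. Your Step~3 goes beyond the paper, which only proves the conditional implication and never exhibits a pair $P^{+},P^{-}$. Your binary example with $g(B)=B$ checks out: $\mathbb{E}_{P^{+}}[B\mid R=1]=0.45/0.5=0.9$ and $\mathbb{E}_{P^{-}}[B\mid R=1]=0.05/0.5=0.1$, compared with $\mathbb{E}[B]=0.5$, and $x=(1,1)$ has positive probability under both laws. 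It therefore usefully shows that the lemma's hypotheses are not vacuous.
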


\begin{proof}
For two features, the Conditional SHAP attribution of \(R\) is

\[
\setlength{\abovedisplayskip}{2pt}
\phi_{R,P}^{cond}(x) = \frac{1}{2} \left[ v(\{R\})-v(\emptyset) \right] + \frac{1}{2} \left[ v(\{B,R\})-v(\{B\})\right]
\]

Since the classifier depends only on \(B\), we have
$v(\emptyset) = \mathbb{E}_{P}[g(B)]$,
$v(\{R\}) = \mathbb{E}_{P}[g(B)\mid R=r]$,
$v(\{B\}) = \mathbb{E}_{P}[g(B)\mid B=b] = g(b)$,
and $v(\{B,R\}) = f(b,r) = g(b)$.
Therefore, $v(\{B,R\})-v(\{B\}) = g(b)-g(b) = 0$.
Substituting into the Conditional SHAP formula gives

\[
\setlength{\abovedisplayskip}{2pt}
\phi_{R,P}^{cond}(x) = \frac{1}{2} \left(\mathbb{E}_{P}[g(B)\mid R=r] - \mathbb{E}_{P}[g(B)] \right)
\]

Thus, the sign of the Conditional SHAP value assigned to \(R\) is determined by whether conditioning on \(R=r\) increases or decreases the expected value of the model-used feature response \(g(B)\). If $\mathbb{E}_{P}[g(B)\mid R=r] > \mathbb{E}_{P}[g(B)]$ then $\phi_{R,P}^{cond}(x)>0$. If $\mathbb{E}_{P}[g(B)\mid R=r] < \mathbb{E}_{P}[g(B)]$ then $\phi_{R,P}^{cond}(x)<0$. This proves that the attribution sign of \(R\) can change even though the classifier has no direct dependence on \(R\). The sign is caused by the conditional data law \(P(B\mid R=r)\), not by direct model use of \(R\).
\end{proof}

\section{Practical Experiments}

\subsection {Experimental environments and features}
We implemented all experiments in Google Colab using an NVIDIA A100 GPU. We used three malware datasets, EMBER-2018 \footnote{https://github.com/elastic/ember}, EMBER-2024 \footnote{https://github.com/futurecomputing4ai/ember2024}, and BODMAS \footnote{https://github.com/whyisyoung/BODMAS}, and we limited the feature space to static PE features that are common and meaningful for malware analysis. We excluded non-feature information, such as hashes and other metadata that do not directly describe the executable structure. Code is available in the Malware-SHAP repository \footnote{https://github.com/MohseniMalwareLab/Malware-SHAP} for reproducibility.


\begin{figure}[t]
    \centerline{\includegraphics[width=0.9\columnwidth]{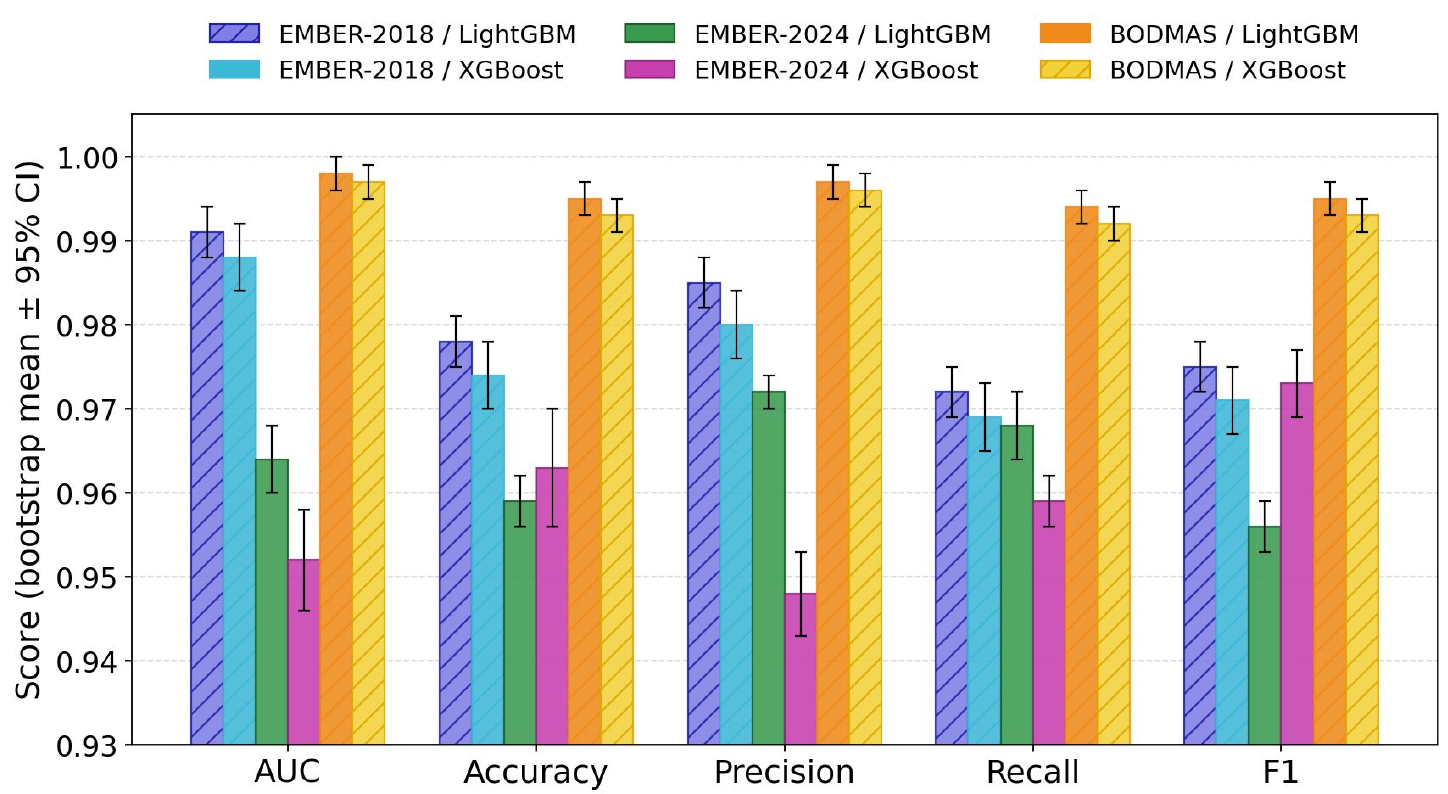}}
    \caption{Detection performance of the two fixed baseline detectors, LightGBM and XGBoost, on EMBER-2018, EMBER-2024, and BODMAS, reported as bootstrap means with 95\% confidence intervals over five seeds. \textbf{Takeaway:} Because the classifiers being explained are uniformly accurate, any subsequent change in SHAP attribution must be attributed to the explanation settings rather than to poor model quality.}
    \label{fig:baseline}
\end{figure}

\subsection{Experiments}
In this section, we briefly present a series of experiments to demonstrate the performance of the SHAP explanation process in malware classification. We used LightGBM with num\_leaves = 64 and XGBoost with max\_leaves = 64, with a learning rate of 0.05 and five different seeds = \{2026, 2027, 2028, 2029, 2030\}. The key takeaway is that these baseline models serve as fixed base detectors, not the main scientific claim. The purpose is to test whether SHAP explanations remain reliable when the classifier is already strong. Since both models perform well, later changes in SHAP attribution can be attributed to explanation settings such as the value function, background distribution, proxy features, or invalid coalitions, rather than to poor classifier quality.

\textbf{Dependencies:} The first experiment measured dependency in the feature space. For each pair of feature groups $G_a$ and $G_b$, we computed the group-level mean absolute correlation. This experiment tested whether malware features behave like independent players. If the feature groups were independent, then $P(X_{\bar S}\mid X_S=x_S)\approx P(X_{\bar S})$. Our motivation was to show that this condition is unrealistic for PE malware data because byte content, entropy, \texttt{strings}, \texttt{headers}, \texttt{imports}, \texttt{sections}, and \texttt{data-directories} are all linked through the same executable file structure. The dependency experiment is a verification of Lemma ~\ref{lem:invalid_coalition_interventional_malware}, ~\ref{lem:proxy_attribution_failure_conditional_shap_malware} and ~\ref{lem:sign_instability_conditional_shap_malware}. 

\begin{figure*}[t]
    \centerline{\includegraphics[width=0.85\textwidth]{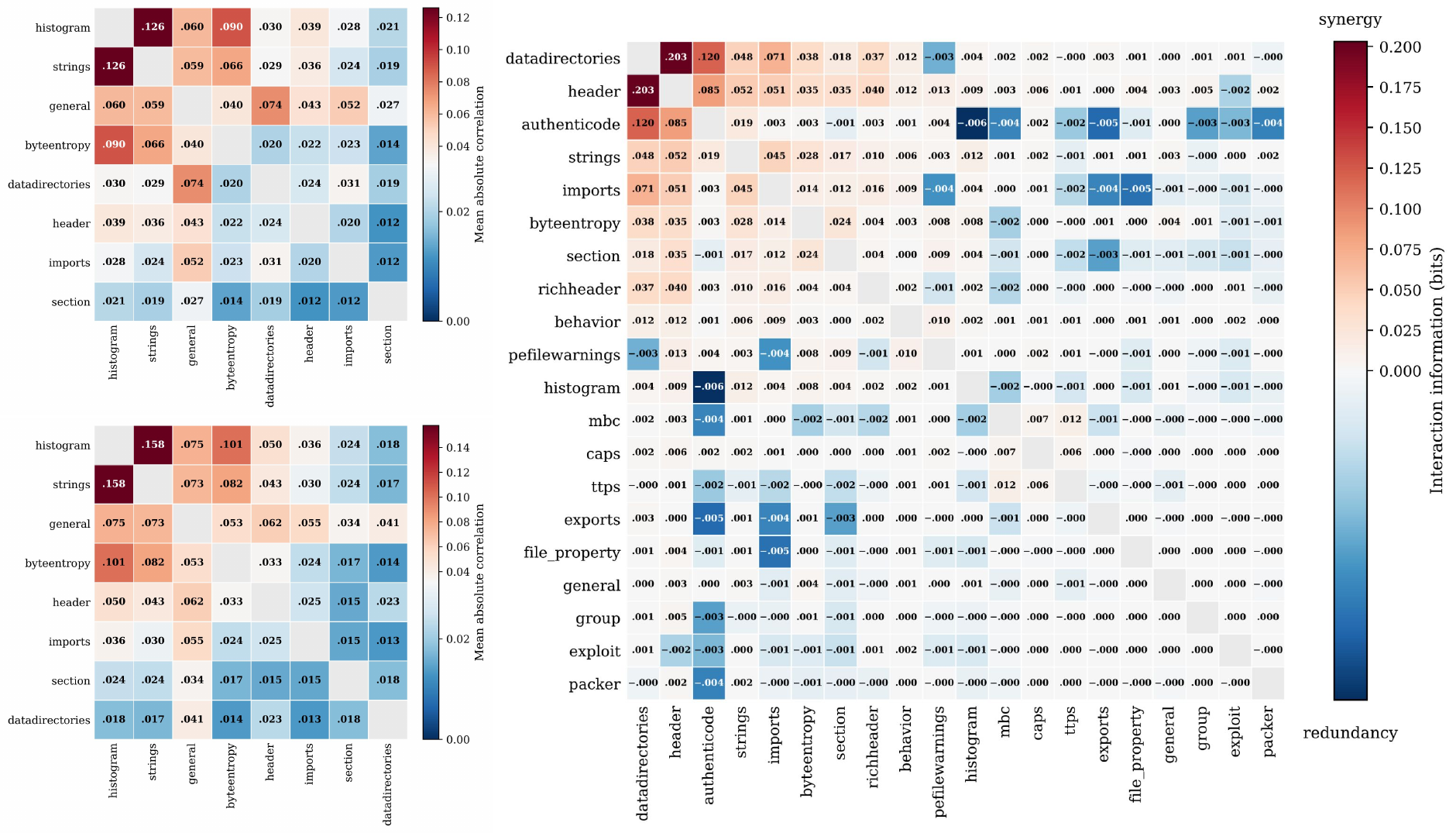}}

    \caption{Group-level dependence of the static PE feature space. \textbf{Left :} mean absolute correlation for EMBER-2018 (Top) and BODMAS (Bottom), both dominated by a redundant \texttt{histogram}, \texttt{strings}, and \texttt{byte-entropy} cluster. \textbf{Right:} pairwise interaction information (bits) for EMBER-2024, dominated by synergy among the structural groups \texttt{data-directories}, \texttt{header}, \texttt{authenticode}, and \texttt{imports}. \textbf{Takeaway:} the groups are not independent;  redundancy violates independence in EMBER-2018/BODMAS (hurting conditional SHAP, Lemmas~2--4), synergy in EMBER-2024 (hurting interventional SHAP, Lemma~1).}
    \label{fig:heatmap}
\end{figure*}

\textbf{Feature Redundancy:} In the proxy redundancy experiment, we added exact duplicates of features, $C_j=X_j$, and checked whether model performance remained stable as SHAP credit moved from the original feature to its copy. The proxy redundancy experiment verifies Lemma~\ref{lem:redundancy_failure_conditional_shap_malware}.

\subsection{Baseline}
Figure~\ref{fig:baseline} reports the performance of the two fixed detectors, LightGBM and XGBoost, across EMBER-2018, EMBER-2024, and BODMAS, as bootstrap means with $95\%$ confidence intervals over five seeds. All six detectors are uniformly strong. LightGBM attains an AUC of $0.990$ (precision $0.991$) on EMBER-2018, $\texttt{0.964}$ (precision $\texttt{0.972}$) on EMBER-2024, and $0.996$ (precision $0.995$) on BODMAS, while XGBoost reaches $0.990$ (precision $0.943$) on EMBER-2018, $\texttt{0.953}$ (precision $\texttt{0.962}$) on EMBER-2024, and $0.996$ (precision $0.994$) on BODMAS. Because the classifiers are already accurate, any subsequent change in SHAP attribution must be due to the explanation settings, the value function, the background distribution, proxy features, or invalid coalitions, rather than poor model quality.

\begin{figure}[t]
    \centerline{\includegraphics[width=0.9\columnwidth]{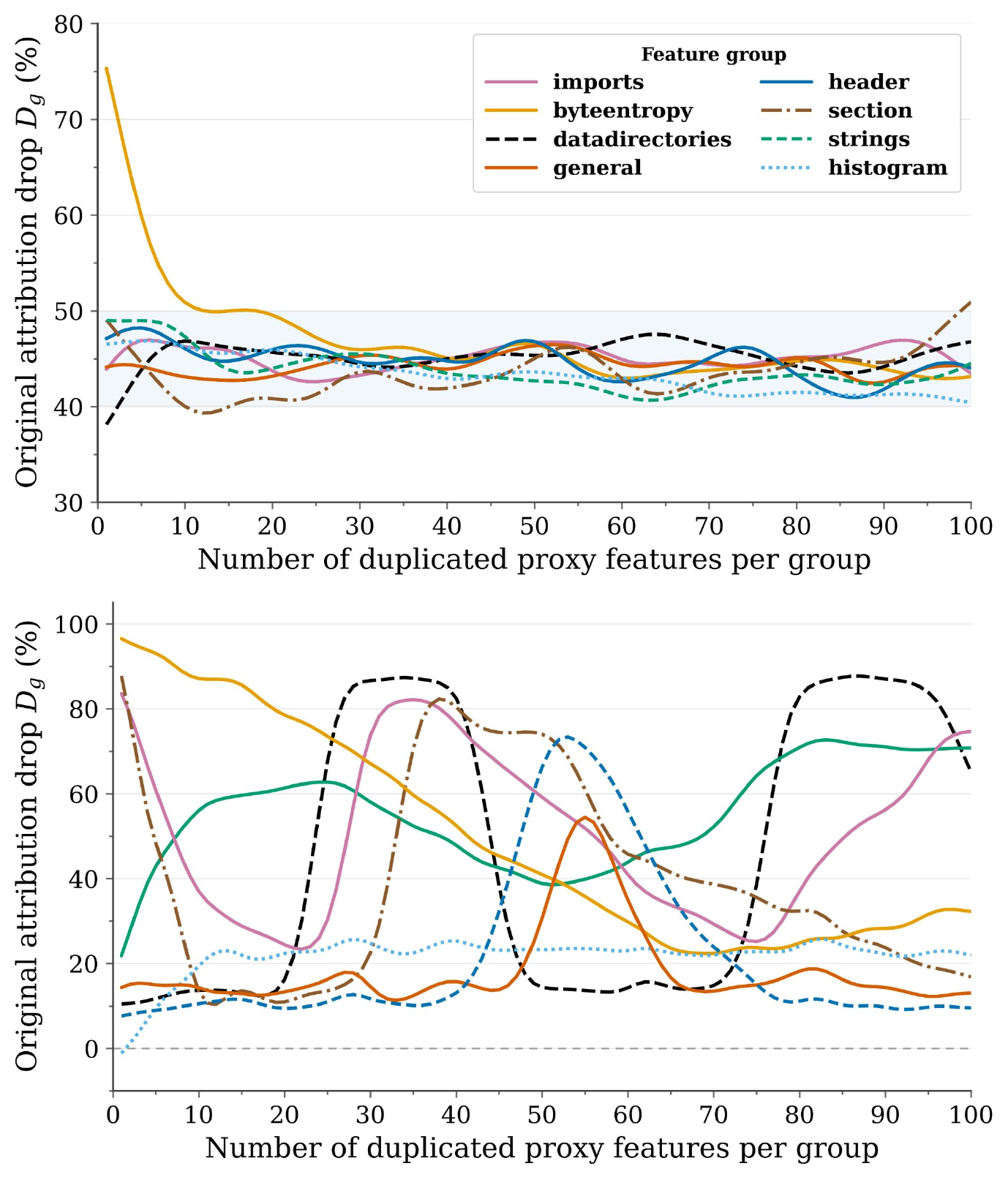}}    
    \caption{Proxy-redundancy sweep on EMBER-2018. \emph{X:} injected proxies per group ($0$--$100$); \emph{Y:} drop in original-feature attribution (\%). \textbf{Top (LightGBM):} smooth convergence to $40$--$50\%$, matching the $m=2$ prediction $1-1/m=0.50$ ($D_g\approx0.495$). \textbf{Bottom (XGBoost):} the same failure, but unstable and non-monotonic. Accuracy is unchanged; only SHAP allocation shifts. \textbf{Takeaway:} redundant proxies divert credit from model-used features without affecting predictions, demonstrating the redundancy failure of Lemma~2.}    
    \label{fig:proxy-plots}
\end{figure}

\subsection{Dependency}

Figure~\ref{fig:heatmap} reports the group-level dependence structure of the static PE feature space across three benchmarks. The two left panels measure the mean absolute correlation between every pair of feature groups for EMBER-2018 (upper Left) and BODMAS (lower Left). In contrast, the right panel reports pairwise interaction information (in bits) for EMBER-2024. In both correlation maps, the dominant structure is a content-based cluster binding the \texttt{histogram}, \texttt{strings}, and \texttt{byte-entropy} groups, with a weaker but consistent association extending to the \texttt{section} and \texttt{header} groups. BODMAS shows this clustering more strongly than EMBER-2018, most clearly in the histogram-strings relationship. These maps measure statistical association rather than causation, and the natural reading is that the clustered groups share common latent sources, such as packing, file size, compiler behavior, or family-specific implementation style, which simultaneously imprint the byte frequency, entropy, and string statistics extracted from the same executable. Operationally, for many feature subsets $S$, the data satisfy $P(X_{\bar{S}} \mid X_S = x_S) \neq P(X_{\bar{S}})$; the feature groups are not the independent players SHAP's coalition mechanism presupposes.

The EMBER-2024 panel exposes a complementary and equally damaging mode of dependence. Rather than the redundancy captured by correlation, the interaction information quantifies how much two groups jointly determine the prediction beyond their individual contributions. Positive values denote \emph{synergy}, where the groups carry information only when observed together, and negative values denote \emph{redundancy}, where they duplicate the same signal. EMBER-2024 is dominated by synergy rather than redundancy, with the largest interactions ($\approx 0.20$ bits) concentrated among the PE \emph{structural} groups \texttt{data-directories}, \texttt{header}, \texttt{authenticode}, and \texttt{imports} that jointly encode how the file is laid out and signed. This is the information-theoretic counterpart of the correlation clusters in the left panels, where the content groups of EMBER-2018 and BODMAS are \emph{redundantly} dependent; the structural groups of EMBER-2024 are \emph{synergistically} dependent. Both patterns violate the separable player assumption, but they fail in opposite directions, and a single correlation statistic would not have revealed the synergistic case at all.

These panels provide the empirical basis for our position and lemmas. The redundancy visible in the EMBER-2018 and BODMAS content clusters is exactly the regime of Lemma~\ref{lem:redundancy_failure_conditional_shap_malware}. When several groups encode the same latent factor $Z$, conditioning on any one of them reveals the others; thus, conditional SHAP dilutes a model used feature's credit by $1/m$ and transfers it to proxies that the classifier never uses, while the more general statistical association across these groups drives the proxy attribution and sign-instability failures of Lemmas~\ref{lem:proxy_attribution_failure_conditional_shap_malware} and~\ref{lem:sign_instability_conditional_shap_malware}. The synergy in EMBER-2024, in turn, indicts the interventional value function because synergistic groups are informative only in combination. Replacing one with an independent background draw destroys the joint configuration. It produces an off-manifold coalition that no real executable would exhibit, so the model is queried in regions it never learned. The figure therefore shows that neither realization of SHAP escapes the data structure: redundancy undermines the conditional formulation, and synergy undermines the interventional one. This is why the chosen coalition game, rather than malware behavior, governs attribution and why SHAP cannot serve as a standalone explanation of these detectors.

\subsection{Proxy Redundancy}

Figure ~\ref{fig:proxy-plots}, top plot, shows the proxy redundancy for EMBER-2018 with the LightGBM model. The LightGBM panel offers the closest empirical match to Lemma~\ref {lem:redundancy_failure_conditional_shap_malware}. In the pairwise case, which instantiates the $m=2$ regime of the lemma, each feature used by the model is paired with a single exact proxy, so the original features are predicted to retain a fraction $1/m = 1/2$ of their attribution, resulting in an Original Attribution Loss of $1 - 1/m = 0.50$. Defining $\mathrm{D_g} = 1 - \big(\sum_i A_i^{\mathrm{aug}}\big)\big/\big(\sum_i A_i^{\mathrm{base}}\big)$ over the $20$ original/redundant pairs, we measure $\mathrm{D_g} = 1 - 2.061/4.08 \approx 0.495$, within half a percentage point of the closed form value. At the same time, the displaced credit reappears on the proxy copies ($\sum_i A_{c_i}^{\mathrm{aug}} = 2.262$) that the classifier does not functionally require. As the sweep increases the number of injected duplicates from $0$ to $100$, the curves converge to a stable band in the $40$--$50\%$ range, with an overall mean original drop of about $44.6\%$ and a median of about $44.3\%$ across the first $100$ settings. This smooth convergence is consistent with the lemma: the idealized $1/m$ collapse holds for the exact conditional value function, whereas LightGBM, when applied to a fitted gradient-boosted ensemble, routes splits through a bounded set of redundancy representatives. Thus, additional exact copies enter few or no splits, and the estimator effectively continues to observe the two-player regime ($m_{\mathrm{eff}}\!\approx\!2$). The detector is held fixed, and its predictive performance is unchanged; only the SHAP allocation moves. LightGBM demonstrates the redundancy failure of Lemma~\ref{lem:redundancy_failure_conditional_shap_malware}: roughly half of the original attribution is reassigned to redundant proxies, although the model and underlying malware signal are unchanged.

The XGBoost panel (Figure~\ref{fig:proxy-plots}, bottom plot) exhibits the same qualitative failure predicted by Lemma~\ref{lem:redundancy_failure_conditional_shap_malware}, a substantial transfer of attribution from original features to redundant proxies, but in a markedly less stable form. Rather than the smooth $40$--$50\%$ convergence seen for LightGBM, the XGBoost curves are highly non-monotonic across the sweep. Some feature groups drop by more than $80$--$90\%$, while others remain low or briefly register negative drops, in which the original features receive \emph{more} attribution after proxy augmentation than before. The overall mean original drop across the first $100$ settings is about $37.9\%$, but the median is only about $28.9\%$, reflecting a far more uneven distribution than LightGBM. This behavior remains consistent with Lemma~\ref{lem:redundancy_failure_conditional_shap_malware} rather than contradicting it. The lemma fixes the redundancy mechanism but not the magnitude of the realized split, which is determined by how the estimator distributes credit among observationally equivalent features; the negative and oscillating drops do not indicate that any proxy became causally important, but rather that XGBoost's greedy split selection and tie-breaking among exact duplicates redistribute credit in an unstable, path-dependent manner. The contrast with LightGBM is itself the point: for the same datasets, the same duplicate construction, and equally strong fixed detectors, the amount of attribution that moves and its stability depend on the tree learner and the SHAP allocation rather than on the malware. XGBoost thus reinforces the paper's central position that SHAP credit is not invariant to redundant feature encoding, while showing that the instability the lemma anticipates can manifest far more severely than the idealized $1/m$ value suggests.

\section{Conclusion}
SHAP should not be treated as a standalone explanation of static PE malware classifiers. For a fixed value function and simplified input mapping, its attributions are formally valid and satisfy local accuracy, missingness, and consistency; however, these guarantees describe a chosen coalition game, not the malicious behavior an analyst seeks to understand. Because PE feature groups are not independent players but linked parts of the same executable, jointly shaped by file structure, packing, compiler behavior, and family conventions, the choice of game is neither neutral nor obvious and materially determines the result. Conditional SHAP dilutes a model-used feature's credit by a factor of $1/m$ across $m-1$ redundant proxies, assigns nonzero importance to features the classifier never uses, and can reverse an unused feature's sign as the data distribution changes, while interventional SHAP queries off-manifold coalitions that no real executable would exhibit. Our experiments confirm this account while isolating it from model quality. We therefore position SHAP as a constrained diagnostic requiring an explicitly stated data distribution and domain validation, rather than a self-sufficient account of malware behavior \cite{gaur2022knowledge}.

\bibliographystyle{IEEEtran}
\bibliography{biblio}

\end{document}